\documentclass{article}

\usepackage[preprint]{neurips_2023}

\usepackage[utf8]{inputenc}
\usepackage[T1]{fontenc}
\usepackage{hyperref}
\usepackage{url}
\usepackage{booktabs}
\usepackage{amsmath}
\usepackage{amsfonts}
\usepackage{amssymb}
\usepackage{amsthm}
\usepackage{nicefrac}
\usepackage{microtype}
\usepackage{xcolor}
\usepackage{graphicx}

\newtheorem{assumption}{Assumption}
\newtheorem{lemma}{Lemma}
\newtheorem{proposition}{Proposition}
\newtheorem{theorem}{Theorem}
\DeclareMathOperator{\Var}{Var}

\title{Restricted Entropic and Quadratic Hedging in the Small-Risk-Aversion Limit}

\author{%
  Jonas~Hanetho \\
  Department of Mathematics\\
  University of Oslo\\
  \texttt{jonasrha@ifi.uio.no} \\
}

\begin{document}

\maketitle

\begin{abstract}
In incomplete markets, contingent claims cannot in general be replicated exactly, and hedging must therefore be formulated as an optimization problem over admissible trading strategies. This paper studies the relation between quadratic hedging and entropic hedging in a discrete-time incomplete market when strategies are restricted to bounded finite-dimensional parametric classes. These restrictions induce admissible gain sets that are compact and convex in \(L^2\), providing a common well-posed setting for both the restricted mean-square hedging problem and the restricted exponential-utility hedging problem. The main contribution is to establish a uniform small-risk-aversion expansion of the entropic risk functional over the restricted gain set and to use this uniformity to obtain convergence of the corresponding minimizers. Under a uniform exponential integrability assumption, the restricted entropic hedging gains converge in \(L^2\) to the unique restricted quadratic hedge as the risk-aversion parameter tends to zero. The theoretical result is illustrated numerically in a regularized discrete-time stochastic volatility model of Heston type.
\end{abstract}

\section{Introduction}

In incomplete financial markets, contingent claims cannot in general be replicated exactly. Hedging must therefore be formulated as an optimization problem over admissible trading strategies or terminal gains. A classical approach is quadratic hedging, where the objective is to minimize mean-squared hedging error. In discrete time this leads to an $L^2$-projection problem and to the variance-optimal hedging framework described, for instance, in~\cite[Section~10.3]{follmer2011stochastic} and~\cite{schweizer1995variance}.

A second approach is based on utility maximization and convex risk measures. In particular, entropic hedging is induced by exponential utility and leads to the entropic risk functional, which is a nonlinear criterion involving exponential moments of the terminal hedged payoff. The relation between exponential utility and mean-variance criteria is classical in the small-risk-aversion regime; see, e.g.,~\cite[Section~3]{pratt1978risk}. However, pointwise small-$\gamma$ expansions for fixed terminal payoffs are not sufficient for comparing hedging minimizers, since the entropic minimizer itself depends on $\gamma$.

The setting considered here is a discrete-time incomplete market in which admissible trading strategies are restricted to bounded finite-dimensional parametric classes. This restriction is natural in numerical hedging, where strategies are represented by finitely many basis coefficients and optimized over a bounded parameter domain, as in simulation-based methods in computational finance~\cite[Chapter~8]{glasserman2004monte} and in regression-based approaches such as~\cite{longstaff2001valuing}. The restriction is also mathematically useful as the induced admissible gain set is compact and convex in $L^2$.

The purpose of this paper is to compare restricted quadratic and restricted entropic hedging on this common admissible gain set. Under a uniform exponential integrability assumption, a uniform small-risk-aversion expansion of the entropic objective is proved over the whole restricted gain set. Uniformity allows the asymptotic expansion to be passed through the minimization problem. As a consequence, restricted entropic hedging gains converge in $L^2$ to the restricted quadratic hedge as $\gamma\downarrow0$.

The result is then illustrated numerically in a regularized discrete-time stochastic volatility model of Heston type~\citep{heston1993closed}. The model is modified by truncation and compensation so that discounted prices remain martingales and the exponential integrability assumption used in the analysis is satisfied.

\section{Market Model and Admissible Gains}

Fix a finite time grid
\[
0=t_0<t_1<\cdots<t_N=T
\]
and a filtered probability space
\[
(\Omega,\mathcal F,\mathbb Q,\mathbb F=(\mathcal F_k)_{k=0}^N).
\]
Unless otherwise stated, all expectations are taken under $\mathbb Q$. Let
\[
L^2:=L^2(\mathcal F_N)
\]
be the Hilbert space of square-integrable terminal random variables, with inner product
\[
\langle X,Y\rangle=\mathbb E[XY],
\qquad
\|X\|_2=\bigl(\mathbb E[X^2]\bigr)^{1/2}.
\]

Let $S=(S_k)_{k=0}^N$ denote the discounted price process of $d$ traded assets. The following standing assumption is used throughout.

\begin{assumption}\label{ass:price_martingale}
The discounted price process $S$ is adapted and is a square-integrable $\mathbb Q$-martingale. Thus, for $k=0,\dots,N-1$,
\[
\mathbb E[\Delta S_{k+1}\mid \mathcal F_k]=0,
\qquad
\Delta S_{k+1}:=S_{k+1}-S_k,
\]
and
\[
\mathbb E\left[\sum_{k=0}^{N-1}\|\Delta S_{k+1}\|^2\right]<\infty.
\]
\end{assumption}

A trading strategy is a predictable process $\delta=(\delta_k)_{k=0}^{N-1}$, where each $\delta_k$ is $\mathcal F_k$-measurable. Strategies are assumed to be self-financing. 
The corresponding discounted terminal trading gain is the discrete stochastic integral
\[
(\delta\cdot S)_N
=
\sum_{k=0}^{N-1}\delta_k^\top \Delta S_{k+1}.
\]
Let
\[
\Theta
:=
\left\{
\delta\text{ predictable}:
\mathbb E\left[\sum_{k=0}^{N-1}(\delta_k^\top \Delta S_{k+1})^2\right]<\infty
\right\}
\]
be the class of square-integrable predictable strategies. Define the stochastic integral operator
\[
\mathcal I:\Theta\to L^2(\mathcal F_N),
\qquad
\mathcal I(\delta):=(\delta\cdot S)_N,
\]
and the associated gain space
\[
G:=\mathcal I(\Theta)\subset L^2(\mathcal F_N).
\]

\begin{lemma}\label{lem:zero_mean}
Every admissible self-financing gain has zero mean. I.e., 
\[
\mathbb E[g] = 0,
\]
for all $g \in G$. 
\end{lemma}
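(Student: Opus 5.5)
The plan is to reduce the statement to a one-period computation and exploit the martingale property of $S$ from Assumption~\ref{ass:price_martingale}. Take $g\in G$, so $g=\mathcal I(\delta)=\sum_{k=0}^{N-1}\delta_k^\top\Delta S_{k+1}$ for some $\delta\in\Theta$. By linearity of expectation it suffices to show that each summand $X_k:=\delta_k^\top\Delta S_{k+1}$ is integrable with $\mathbb E[X_k]=0$. For integrability, the defining condition of $\Theta$ gives $\mathbb E[X_k^2]\le \mathbb E\bigl[\sum_j X_j^2\bigr]<\infty$. Hence $X_k\in L^2\subset L^1$, since $\mathbb Q$ is a probability measure. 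In particular $\mathbb E[g]$ is well defined and equals $\sum_k\mathbb E[X_k]$.

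The heuristic step is $\mathbb E[X_k\mid\mathcal F_k]=\delta_k^\top\mathbb E[\Delta S_{k+1}\mid\mathcal F_k]=0$, pulling out the $\mathcal F_k$-measurable factor $\delta_k$. This is the main obstacle, because $\delta_k$ is only $\mathcal F_k$-measurable and possibly unbounded. The ``taking out what is known'' rule therefore needs justification: a priori, only the scalar product $X_k$ is known to be integrable, not $\|\delta_k\|\,\|\Delta S_{k+1}\|$. I would handle this by truncation. Set $A_n:=\{\|\delta_k\|\le n\}\in\mathcal F_k$. Then $\mathbf 1_{A_n}\delta_k$ is bounded and $\mathcal F_k$-measurable, and $\Delta S_{k+1}$ is integrable (indeed square-integrable) by Assumption~\ref{ass:price_martingale}. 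Hence
\[
\mathbb E[\mathbf 1_{A_n}X_k]
=\mathbb E\bigl[\mathbf 1_{A_n}\delta_k^\top\,\mathbb E[\Delta S_{k+1}\mid\mathcal F_k]\bigr]=0 .
\]

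Finally, let $n\to\infty$. Since $\delta_k$ is finite-valued, $A_n\uparrow\Omega$, so $\mathbf 1_{A_n}X_k\to X_k$ almost surely. The bound $|\mathbf 1_{A_n}X_k|\le|X_k|\in L^1$ lets dominated convergence give $\mathbb E[X_k]=\lim_n\mathbb E[\mathbf 1_{A_n}X_k]=0$. Summing over $k=0,\dots,N-1$ yields $\mathbb E[g]=0$ for every $g\in G$.
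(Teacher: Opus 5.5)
Your proposal is correct and follows the same argument as the paper: condition each summand $\delta_k^\top\Delta S_{k+1}$ on $\mathcal F_k$, use $\mathbb E[\Delta S_{k+1}\mid\mathcal F_k]=0$, and sum over $k$. The paper pulls $\delta_k$ out of the conditional expectation without comment, so your truncation on $\{\|\delta_k\|\le n\}$ followed by dominated convergence (using $\delta_k^\top\Delta S_{k+1}\in L^2$ from the definition of $\Theta$) is a sound way of making that step rigorous for unbounded, vector-valued $\delta_k$, where integrability of the scalar product need not imply integrability of each componentwise product.
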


\begin{proof}
Since $\delta_k$ is $\mathcal F_k$-measurable and $S$ is a $\mathbb Q$-martingale,
\[
\mathbb E[\delta_k^\top\Delta S_{k+1}]
=
\mathbb E\left[
\delta_k^\top\mathbb E[\Delta S_{k+1}\mid\mathcal F_k]
\right]
=0,
\]
and summing over $k$ yields the claim. 
\end{proof}

For a fixed contingent claim $H\in L^2(\mathcal F_N)$, the terminal hedged payoff generated by a gain $g$ is denoted
\[
X_g:=g-H.
\]
The market is incomplete when not every square-integrable claim can be represented as initial capital plus an attainable gain. In that case exact replication is generally impossible, and hedging is formulated as an optimization problem over admissible gains.

\section{Finite-Dimensional Strategy Restrictions}

Fix $n\in\mathbb N$. Let $V_n\subset \Theta$ be a finite-dimensional linear subspace with $m_n:=\dim(V_n)$, and fix a basis
\[
\{\varphi^1,\dots,\varphi^{m_n}\}\subset \Theta.
\]
Every strategy in $V_n$ can then be written uniquely as
\[
\delta^\theta
=
\sum_{j=1}^{m_n}\theta_j\varphi^j,
\qquad
\theta=(\theta_1,\dots,\theta_{m_n})\in\mathbb R^{m_n}.
\]

\begin{assumption}\label{ass:bounded_basis}
There exists $M_n>0$ such that
\[
\|\varphi_k^j\|_\infty\le M_n
\qquad
\text{for all }j=1,\dots,m_n,\quad k=0,\dots,N-1.
\]
\end{assumption}

For $q>0$, define the coefficient-bounded admissible strategy class
\[
V_n^A
:=
\{\delta^\theta\in V_n:\theta\in[-q,q]^{m_n}\}
\subset\Theta.
\]
The corresponding restricted admissible gain set is
\[
G_n^A
:=
\mathcal I(V_n^A)
=
\{\mathcal I(\delta^\theta):\theta\in[-q,q]^{m_n}\}
\subset L^2(\mathcal F_N).
\]

\begin{lemma}\label{lem:strategy_bound}
Under Assumption~\ref{ass:bounded_basis}, every $\delta\in V_n^A$ satisfies
\[
\|\delta_k\|_\infty\le m_n q M_n,
\qquad
k=0,\dots,N-1.
\]
\end{lemma}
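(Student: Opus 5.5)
The plan is to apply the triangle inequality directly to the basis expansion of $\delta\in V_n^A$. By definition of $V_n^A$, any such $\delta$ can be written as $\delta=\delta^\theta=\sum_{j=1}^{m_n}\theta_j\varphi^j$ for some $\theta\in[-q,q]^{m_n}$. This representation is unique because $\{\varphi^1,\dots,\varphi^{m_n}\}$ is a basis of $V_n$, although uniqueness is not needed for the bound.

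Fix a time index $k\in\{0,\dots,N-1\}$. The $k$-th component is
\[
\delta_k=\sum_{j=1}^{m_n}\theta_j\varphi^j_k.
\]
For almost every $\omega$, the triangle inequality for the norm on $\mathbb R^d$ gives
\[
\|\delta_k(\omega)\|\le\sum_{j=1}^{m_n}|\theta_j|\,\|\varphi^j_k(\omega)\|.
\]
We then use two bounds. First, $|\theta_j|\le q$, since $\theta\in[-q,q]^{m_n}$. Second, $\|\varphi^j_k(\omega)\|\le\|\varphi^j_k\|_\infty\le M_n$ outside a null set, by Assumption~\ref{ass:bounded_basis}. Since there are only finitely many indices $j$, the union of these exceptional null sets is still null. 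Hence $\|\delta_k(\omega)\|\le m_n q M_n$ almost surely, and taking the essential supremum yields $\|\delta_k\|_\infty\le m_n q M_n$.

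The only point requiring care is interpreting $\|\cdot\|_\infty$ for vector-valued $\delta_k$. I would read it as the essential supremum of the pointwise Euclidean norm, and this should match the norm used in Assumption~\ref{ass:bounded_basis}. With that convention the argument above goes through as written. It also works for any other fixed norm on $\mathbb R^d$, provided the same norm is used in both the assumption and the conclusion. Beyond this there is no real obstacle; the result is a direct estimate.
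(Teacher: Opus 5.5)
Your proof is correct and follows the paper's own argument: you apply the triangle inequality to the basis expansion and use $|\theta_j|\le q$ and $\|\varphi_k^j\|_\infty\le M_n$ over $m_n$ terms. The only cosmetic difference is that you argue pointwise almost surely before taking the essential supremum, whereas the paper applies the triangle inequality for $\|\cdot\|_\infty$ directly.
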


\begin{proof}
If $\delta=\sum_{j=1}^{m_n}\theta_j\varphi^j$ with $\theta\in[-q,q]^{m_n}$, then
\[
\|\delta_k\|_\infty
\le
\sum_{j=1}^{m_n}|\theta_j|\,\|\varphi_k^j\|_\infty
\le
\sum_{j=1}^{m_n}qM_n
=
m_nqM_n.
\]
\end{proof}

\begin{lemma}\label{lem:GnA_properties}
The set $G_n^A$ is nonempty, compact, and convex in $L^2(\mathcal F_N)$. Moreover,
\[
\mathbb E[g]=0
\qquad
\text{for all }g\in G_n^A.
\]
\end{lemma}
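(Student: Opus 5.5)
The plan is to realize $G_n^A$ as the image of the compact convex cube $[-q,q]^{m_n}$ under a linear, hence continuous, map into $L^2(\mathcal F_N)$. Define
\[
T:\mathbb R^{m_n}\to L^2(\mathcal F_N),
\qquad
T(\theta):=\mathcal I(\delta^\theta)=\sum_{j=1}^{m_n}\theta_j\,\mathcal I(\varphi^j).
\]
Linearity of the discrete stochastic integral in the strategy, together with the unique representation of $\delta^\theta$ in the basis, shows that $T$ is linear and that $G_n^A=T([-q,q]^{m_n})$.

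The first step is to check that $T$ actually maps into $L^2$. Since each basis element $\varphi^j$ lies in $\Theta$, each $\mathcal I(\varphi^j)$ is a finite sum of square-integrable terms and so belongs to $L^2$. One can also see this directly from Assumption~\ref{ass:bounded_basis}:
\[
\|\varphi^{j\top}_k\Delta S_{k+1}\|_2\le M_n\|\Delta S_{k+1}\|_2<\infty
\]
by Assumption~\ref{ass:price_martingale}.

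Continuity of $T$ then follows from the estimate
\[
\|T(\theta)\|_2\le \Bigl(\max_j\|\mathcal I(\varphi^j)\|_2\Bigr)\sum_j|\theta_j|.
\]
This is just the statement that a linear map out of a finite-dimensional space is continuous.

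The three set properties follow in turn.
\begin{itemize}
\item \emph{Nonempty:} $\theta=0$ lies in the cube, so $0=T(0)\in G_n^A$.
\item \emph{Compact:} the cube is compact by Heine--Borel, and the continuous image of a compact set is compact.
\item \emph{Convex:} for $\theta,\theta'$ in the cube and $\lambda\in[0,1]$, the point $\lambda\theta+(1-\lambda)\theta'$ is again in the cube, and
\[
\lambda T(\theta)+(1-\lambda)T(\theta')=T\bigl(\lambda\theta+(1-\lambda)\theta'\bigr)\in G_n^A .
\]
\end{itemize}

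For the zero-mean claim, note that $V_n^A\subset V_n\subset\Theta$, so $G_n^A\subset G$. Lemma~\ref{lem:zero_mean} then gives $\mathbb E[g]=0$ for every $g\in G_n^A$.

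The main point needing care is that $\mathcal I$ is genuinely linear on $V_n$ with values in $L^2$. This holds because it is a finite sum of products of $\mathcal F_k$-measurable coefficients with increments, and the basis elements lie in $\Theta$, so no closure or limiting argument is required.
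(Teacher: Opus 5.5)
Your proposal is correct and follows essentially the same route as the paper. Both arguments take $G_n^A$ as the image of the compact convex cube $[-q,q]^{m_n}$ under the linear map $\theta\mapsto\mathcal I(\delta^\theta)$, which is continuous because its domain is finite-dimensional; this gives compactness, convexity and nonemptiness (via $\theta=0$), and the zero-mean property comes from Lemma~\ref{lem:zero_mean}. Your explicit estimate $\|T(\theta)\|_2\le\max_j\|\mathcal I(\varphi^j)\|_2\sum_j|\theta_j|$ just makes concrete the finite-dimensional boundedness fact that the paper cites from Brezis.
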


\begin{proof}
The coefficient set $[-q,q]^{m_n}$ is compact and convex in $\mathbb R^{m_n}$. The map
\[
\theta\mapsto \mathcal I(\delta^\theta)
\]
from $\mathbb R^{m_n}$ to $L^2(\mathcal F_N)$ is linear. Since the domain is finite-dimensional, this map is bounded~\cite[Proposition~11.2]{brezis2011functional}, therefore continuous. Therefore $G_n^A$ is compact in $L^2$ as the continuous image of a compact set, and convex as the linear image of a convex set. It is nonempty since $0\in[-q,q]^{m_n}$ and $\mathcal I(\delta^0)=0$. Finally, each $g\in G_n^A$ is a self-financing martingale gain, hence has zero mean by Lemma~\ref{lem:zero_mean}.
\end{proof}

The compactness of $G_n^A$ is the main structural consequence of the bounded finite-dimensional restriction. It gives existence for the restricted entropic problem and compactness control for the small-$\gamma$ limit. Since $[-q,q]^{m_n}$ is not a linear subspace, $G_n^A$ is in general not a linear subspace of $L^2(\mathcal F_N)$.

\section{Restricted Quadratic and Entropic Hedging}

Quadratic hedging minimizes the mean-squared hedging error. In the unrestricted closed linear case, this is an orthogonal projection problem in $L^2$; see, e.g.,~\cite[Section~10.3]{follmer2011stochastic}. Under the bounded coefficient restriction, the feasible set $G_n^A$ is compact and convex but not necessarily linear. The restricted quadratic hedge is therefore the metric projection of the centered claim $H-\mathbb E[H]$ onto $G_n^A$.

Define
\[
g^{Q,n}
\in
\arg\min_{g\in G_n^A}
\left\|\mathbb E[H]+g-H\right\|_2^2.
\]

\begin{proposition}\label{prop:restricted_quadratic}
The restricted quadratic problem admits a unique minimizer $g^{Q,n}\in G_n^A$.
\end{proposition}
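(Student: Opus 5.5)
The plan is to recognize the restricted quadratic problem as a metric projection in the Hilbert space $L^2(\mathcal F_N)$ and then use the classical projection theorem onto nonempty closed convex sets. Set $Y:=H-\mathbb E[H]\in L^2(\mathcal F_N)$. This is well defined since $H\in L^2$ implies $H\in L^1$. The objective then reads
\[
\left\|\mathbb E[H]+g-H\right\|_2^2=\|g-Y\|_2^2 ,
\]
so minimizing over $g\in G_n^A$ amounts to finding a nearest point of $G_n^A$ to $Y$.

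\textbf{Existence.} By Lemma~\ref{lem:GnA_properties}, $G_n^A$ is nonempty and compact in $L^2$. The map $g\mapsto\|g-Y\|_2^2$ is continuous on $L^2$. A continuous function attains its minimum on a nonempty compact set, so a minimizer exists. Alternatively, compactness gives closedness, and existence then also follows from the Hilbert projection theorem.

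\textbf{Uniqueness.} This is the only step needing care, because $G_n^A$ is not a linear subspace. I would use convexity of $G_n^A$ (again Lemma~\ref{lem:GnA_properties}) together with the parallelogram law. Suppose $g_1,g_2\in G_n^A$ both attain the minimal value $d^2$. Then $\tfrac12(g_1+g_2)\in G_n^A$, and
\[
\Bigl\|\tfrac{g_1+g_2}{2}-Y\Bigr\|_2^2
=\tfrac12\|g_1-Y\|_2^2+\tfrac12\|g_2-Y\|_2^2-\tfrac14\|g_1-g_2\|_2^2
=d^2-\tfrac14\|g_1-g_2\|_2^2 .
\]
Minimality of $d^2$ forces $\|g_1-g_2\|_2=0$, so $g_1=g_2$ in $L^2$, i.e.\ almost surely.

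One remark for the write-up: uniqueness holds at the level of gains $g$, not necessarily of the coefficients $\theta$. The map $\theta\mapsto\mathcal I(\delta^\theta)$ need not be injective, so the proposition should be read as a statement about $g^{Q,n}\in G_n^A$.
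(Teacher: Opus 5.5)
Your proposal is correct and follows the paper's route. The paper also treats the problem as the metric projection of $H-\mathbb E[H]$ onto the nonempty closed convex set $G_n^A$, citing the Hilbert projection theorem for existence and uniqueness, and you simply unpack that theorem (compactness for existence, convexity plus the parallelogram law for uniqueness).
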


\begin{proof}
By Lemma~\ref{lem:GnA_properties}, $G_n^A$ is nonempty, compact, and convex in the Hilbert space $L^2(\mathcal F_N)$. In particular, it is closed and convex. The projection theorem for closed convex subsets of Hilbert spaces gives existence and uniqueness of the metric projection~\cite[Theorem~5.2]{brezis2011functional}.
\end{proof}

For $\gamma>0$, define the entropic risk functional
\[
\rho_\gamma(X)
:=
\frac{1}{\gamma}\log\mathbb E[e^{-\gamma X}],
\]
whenever the expectation is finite. This functional is the negative certainty equivalent associated with exponential utility; see, e.g.,~\cite[Section~2.3]{follmer2011stochastic}. The restricted entropic hedging problem is
\[
g_\gamma^{E,n}
\in
\arg\min_{g\in G_n^A}\rho_\gamma(X_g).
\]

Since $\rho_\gamma$ depends on exponential moments, square integrability alone is not enough. The following assumption supplies the tail control needed for finiteness, continuity, and the uniform small-$\gamma$ expansion.

\begin{assumption}\label{ass:uniform_exp_integrability}
There exists $\eta>0$ such that
\[
\sup_{g\in G_n^A}
\mathbb E[e^{\eta |X_g|}]
<\infty.
\]
\end{assumption}

\begin{proposition}\label{prop:restricted_entropic}
Assume Assumption~\ref{ass:uniform_exp_integrability}. Then for every $\gamma\in(0,\eta)$, the restricted entropic hedging problem admits at least one minimizer $g_\gamma^{E,n}\in G_n^A$.
\end{proposition}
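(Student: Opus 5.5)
The plan is to apply the Weierstrass extreme value theorem on the compact set $G_n^A$ given by Lemma~\ref{lem:GnA_properties}. For this we need the objective $g\mapsto\rho_\gamma(X_g)$ to be finite and lower semicontinuous (in fact continuous) with respect to the $L^2$ topology on $G_n^A$.

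\emph{Finiteness.} Fix $\gamma\in(0,\eta)$. We have
\[
e^{-\gamma X_g}\le e^{\gamma|X_g|}\le e^{\eta|X_g|},
\]
so Assumption~\ref{ass:uniform_exp_integrability} gives
\[
\sup_{g\in G_n^A}\mathbb E\bigl[e^{-\gamma X_g}\bigr]\le C:=\sup_{g\in G_n^A}\mathbb E\bigl[e^{\eta|X_g|}\bigr]<\infty.
\]
By Jensen's inequality and Lemma~\ref{lem:GnA_properties},
\[
\mathbb E\bigl[e^{-\gamma X_g}\bigr]\ge e^{-\gamma\mathbb E[X_g]}=e^{\gamma\mathbb E[H]}>0.
\]
Hence $\rho_\gamma(X_g)$ is a real number for every $g\in G_n^A$.

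\emph{Continuity.} This is the main step. Take $g_j\to g$ in $L^2$ within $G_n^A$. We must show $\mathbb E[e^{-\gamma X_{g_j}}]\to\mathbb E[e^{-\gamma X_g}]$. Since $X_{g_j}-X_g=g_j-g$, we have $X_{g_j}\to X_g$ in $L^2$, hence in probability. Therefore $e^{-\gamma X_{g_j}}\to e^{-\gamma X_g}$ in probability. To pass to the limit in expectation, use uniform integrability. Choose $p=\eta/\gamma>1$. Then
\[
\mathbb E\Bigl[\bigl(e^{-\gamma X_{g_j}}\bigr)^p\Bigr]\le\mathbb E\bigl[e^{\eta|X_{g_j}|}\bigr]\le C
\]
uniformly in $j$. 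Boundedness in $L^p$ with $p>1$ implies uniform integrability (de la Vallée-Poussin). Convergence in probability together with uniform integrability gives convergence in $L^1$ (Vitali), and in particular convergence of the expectations. Composing with $\log$, which is continuous on $(0,\infty)$, and dividing by $\gamma$ shows that $g\mapsto\rho_\gamma(X_g)$ is continuous on $G_n^A$. The strict inequality $\gamma<\eta$ is used exactly here, to get the exponent $p>1$.

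\emph{Existence.} A continuous real-valued function on a nonempty compact metric space attains its minimum. This yields $g_\gamma^{E,n}\in G_n^A$.

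If one prefers to stay in parameter space, the same argument works with $\theta\mapsto\rho_\gamma(X_{\mathcal I(\delta^\theta)})$ on $[-q,q]^{m_n}$, using the continuity of $\theta\mapsto\mathcal I(\delta^\theta)$ noted in the proof of Lemma~\ref{lem:GnA_properties}. Uniqueness is not claimed and is not needed.
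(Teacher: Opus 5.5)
Your proposal is correct and follows the same route as the paper. You establish finiteness from Assumption~\ref{ass:uniform_exp_integrability}, get continuity of $g\mapsto\rho_\gamma(X_g)$ from convergence in probability plus uniform integrability, and then apply the extreme value theorem on the compact set $G_n^A$; your explicit $L^p$ bound with $p=\eta/\gamma>1$ usefully fills in the uniform-integrability step that the paper only asserts.
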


\begin{proof}
For $\gamma\in(0,\eta)$, Assumption~\ref{ass:uniform_exp_integrability} gives finiteness of $\rho_\gamma(X_g)$ for all $g\in G_n^A$. If $g_m\to g$ in $L^2$, then $X_{g_m}\to X_g$ in probability. The same exponential integrability assumption gives uniform integrability of $e^{-\gamma X_{g_m}}$, hence
\[
\mathbb E[e^{-\gamma X_{g_m}}]\to \mathbb E[e^{-\gamma X_g}].
\]
Thus $g\mapsto \rho_\gamma(X_g)$ is continuous on $G_n^A$. Since $G_n^A$ is compact, the extreme value theorem gives existence of a minimizer.
\end{proof}

No uniqueness of $g_\gamma^{E,n}$ is required for the convergence result. The statements are made at the level of terminal gains. A parameter vector producing a given terminal gain need not be unique if the map $\theta\mapsto \mathcal I(\delta^\theta)$ is not injective.

\section{Small-Risk-Aversion Expansion}

For a fixed terminal payoff, the entropic risk functional admits a classical mean-variance expansion as $\gamma\downarrow0$~\cite[Section~3]{pratt1978risk}. Here the terminal payoff varies with the admissible gain $g$. Since the entropic minimizer depends on $\gamma$, a pointwise expansion for each fixed $g$ is not enough. The expansion must hold uniformly over the full feasible set $G_n^A$.

\begin{theorem}\label{thm:uniform_small_gamma}
Assume Assumption~\ref{ass:uniform_exp_integrability}. Then, as $\gamma\downarrow0$,
\[
\sup_{g\in G_n^A}
\left|
\rho_\gamma(X_g)
+
\mathbb E[X_g]
-
\frac{\gamma}{2}\Var(X_g)
\right|
=o(\gamma).
\]
Moreover, there exist $\gamma_1\in(0,\eta)$ and $C>0$ such that for all $\gamma\in(0,\gamma_1)$,
\[
\sup_{g\in G_n^A}
\left|
\rho_\gamma(X_g)
+
\mathbb E[X_g]
-
\frac{\gamma}{2}\Var(X_g)
\right|
\le C\gamma^2.
\]
\end{theorem}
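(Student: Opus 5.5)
The plan is to work with the centered variable and expand the cumulant generating function uniformly. For $g\in G_n^A$ set $Y_g:=X_g-\mathbb E[X_g]$, so that
\[
\rho_\gamma(X_g)+\mathbb E[X_g]=\frac1\gamma\log \mathbb E[e^{-\gamma Y_g}].
\]
Note that $\mathbb E[X_g]=-\mathbb E[H]$ by Lemma~\ref{lem:GnA_properties}, and $|\mathbb E[X_g]|\le \mathbb E[|X_g|]$. Assumption~\ref{ass:uniform_exp_integrability} therefore gives, with $K:=\sup_{g}\mathbb E[e^{\eta|X_g|}]<\infty$, the uniform bound
\[
\sup_g \mathbb E[e^{\eta|Y_g|}]\le K e^{\eta|\mathbb E[H]|}=:K'.
\]
This follows from $|Y_g|\le |X_g|+|\mathbb E[H]|$, and $\mathbb E[H]$ is a fixed constant. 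Consequently all moments of $Y_g$ are uniformly bounded: $\mathbb E[|Y_g|^p]\le p!\,\eta^{-p}K'$.

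The first step is a uniform expansion of the moment generating function. Define $m_g(\gamma):=\mathbb E[e^{-\gamma Y_g}]$. Using Taylor's remainder $|e^{x}-1-x-x^2/2|\le \frac{|x|^3}{6}e^{|x|}$ and $\mathbb E[Y_g]=0$, we get
\[
\Bigl|m_g(\gamma)-1-\tfrac{\gamma^2}{2}\Var(X_g)\Bigr|\le \frac{\gamma^3}{6}\mathbb E\bigl[|Y_g|^3e^{\gamma|Y_g|}\bigr].
\]
For $\gamma\le \eta/2$ we bound $|y|^3e^{\gamma|y|}\le c_\eta e^{\eta|y|}$ with $c_\eta=\sup_{y\ge0}y^3e^{-\eta y/2}$. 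Hence
\[
m_g(\gamma)=1+\tfrac{\gamma^2}{2}\Var(X_g)+r_g(\gamma),\qquad |r_g(\gamma)|\le C_1\gamma^3,
\]
uniformly in $g$. In addition, $\Var(X_g)\le \mathbb E[Y_g^2]\le 2\eta^{-2}K'=:V$ uniformly.

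The second step passes to the logarithm. Put $u_g:=m_g(\gamma)-1$. Then $|u_g|\le \tfrac{V}{2}\gamma^2+C_1\gamma^3$, so there is $\gamma_1\in(0,\eta/2]$ such that $|u_g|\le 1/2$ for all $g$ and all $\gamma<\gamma_1$. On $|u|\le1/2$ we have $|\log(1+u)-u|\le u^2$. Therefore
\[
\Bigl|\log m_g(\gamma)-\tfrac{\gamma^2}{2}\Var(X_g)\Bigr|\le |r_g|+u_g^2\le C_1\gamma^3+\bigl(\tfrac V2\gamma^2+C_1\gamma^3\bigr)^2\le C\gamma^3 .
\]
Dividing by $\gamma$ gives the claimed $C\gamma^2$ bound uniformly over $G_n^A$, and in particular the $o(\gamma)$ statement.

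The main obstacle is the uniformity of the cubic remainder: the exponential weight $e^{\gamma|Y_g|}$ must be absorbed into the single uniform exponential moment. This is handled by restricting to $\gamma\le\eta/2$ and trading the polynomial factor for half of the exponent. The logarithm step then needs only the uniform variance bound, so that $u_g$ stays in a fixed neighbourhood of $0$. The compactness of $G_n^A$ is not needed here; everything rests on Assumption~\ref{ass:uniform_exp_integrability} together with the zero-mean property from Lemma~\ref{lem:GnA_properties}.
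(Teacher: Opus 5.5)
Your proof is correct and follows the paper's route. You center to $Y_g$ and control the cubic Taylor remainder of $e^{-\gamma Y_g}$ uniformly by absorbing the polynomial factor into the single exponential moment for $\gamma$ bounded away from $\eta$; then you expand $\log(1+u_g)$ uniformly and divide by $\gamma$. The differences are only in presentation: you make the constants explicit (restricting to $\gamma\le\eta/2$, getting $\mathbb E[X_g]=-\mathbb E[H]$ from the zero-mean property, and using $|\log(1+u)-u|\le u^2$ on $|u|\le 1/2$), whereas the paper works with a generic $\gamma_0<\eta$, uniform moment bounds and $O(\cdot)$ notation.
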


\begin{proof}
Let
\[
Y_g:=X_g-\mathbb E[X_g].
\]
Then $\mathbb E[Y_g]=0$, $\mathbb E[Y^2_g]=\operatorname{Var}(X_g)$ and
\[
\rho_\gamma(X_g)
=
-\mathbb E[X_g]
+
\frac{1}{\gamma}\log\mathbb E[e^{-\gamma Y_g}].
\]
It is therefore enough to expand the final logarithmic term uniformly in $g$.

Fix $\gamma_0\in(0,\eta)$ and let $\gamma\in(0,\gamma_0]$. Taylor's theorem gives
\[
e^{-\gamma Y_g}
=
1-\gamma Y_g+\frac{\gamma^2}{2}Y_g^2+R_g,
\]
with
\[
|R_g|
\le
\frac{\gamma^3}{6}|Y_g|^3e^{\gamma_0|Y_g|}.
\]
Since Assumption~\ref{ass:uniform_exp_integrability} implies uniform exponential integrability of $X_g$, it also implies uniform bounds on all polynomial moments of $X_g$. In particular, $\sup_{g\in G_n^A}|E[X_g]|<\infty$. Hence, we have $|Y_g|\le |X_g|+C$ uniformly in $g$. Choosing $\gamma_0<\eta$, this gives
\[
\sup_{g\in G_n^A}E[|Y_g|^3 e^{\gamma_0 |Y_g|}]<\infty,
\]
which justifies the uniform Taylor remainder estimate. 
Hence
\[
\sup_{g\in G_n^A}
\left|
\mathbb E[e^{-\gamma Y_g}]
-
1
-
\frac{\gamma^2}{2}\Var(X_g)
\right|
=
O(\gamma^3).
\]
Define
\[
u_g(\gamma):=\mathbb E[e^{-\gamma Y_g}]-1.
\]
The previous estimate gives
\[
\sup_{g\in G_n^A}|u_g(\gamma)|=O(\gamma^2).
\]
For sufficiently small $\gamma$, this bound is uniformly smaller than $1/2$, so the logarithm can be expanded uniformly:
\[
\log(1+u_g(\gamma))
=
u_g(\gamma)+O(u_g(\gamma)^2).
\]
Since $u_g(\gamma)=O(\gamma^2)$ uniformly, it follows that
\[
\log\mathbb E[e^{-\gamma Y_g}]
=
\frac{\gamma^2}{2}\Var(X_g)+O(\gamma^3),
\]
uniformly over $g\in G_n^A$. Dividing by $\gamma$ yields
\[
\rho_\gamma(X_g)
=
-\mathbb E[X_g]
+
\frac{\gamma}{2}\Var(X_g)
+
O(\gamma^2),
\]
uniformly over $G_n^A$, which proves the stated expansion.
\end{proof}

\section{Convergence of Restricted Entropic Hedges}

The uniform expansion above identifies the small-$\gamma$ limit of the entropic objective. To pass from convergence of objective functions to convergence of minimizers, compactness of $G_n^A$ and uniqueness of the restricted quadratic minimizer are used.

For $\gamma\in(0,\eta)$, define the rescaled entropic objective
\[
J_{\gamma,n}(g)
:=
\frac{2}{\gamma}
\left(\rho_\gamma(X_g)+\mathbb E[X_g]\right),
\qquad
g\in G_n^A.
\]
Since $\mathbb E[g]=0$ for all $g\in G_n^A$, we have
\[
\mathbb E[X_g]=\mathbb E[g-H]=-\mathbb E[H],
\]
which is constant on $G_n^A$. Therefore $g\mapsto\rho_\gamma(X_g)$ and $g\mapsto J_{\gamma,n}(g)$ have the same minimizers on $G_n^A$. 
Moreover,
\[
\Var(X_g)
=
\mathbb E\left[\left(g-H+\mathbb E[H]\right)^2\right]
=
\left\|\mathbb E[H]+g-H\right\|_2^2.
\]
Thus the limiting objective in Theorem~\ref{thm:uniform_small_gamma} is exactly the restricted quadratic objective.

\begin{lemma}\label{lem:uniform_objective_convergence}
Under Assumption~\ref{ass:uniform_exp_integrability},
\[
\sup_{g\in G_n^A}
\left|
J_{\gamma,n}(g)-\Var(X_g)
\right|
\to0
\qquad
\text{as }\gamma\downarrow0.
\]
\end{lemma}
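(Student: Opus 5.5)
The plan is to derive the statement directly from Theorem~\ref{thm:uniform_small_gamma} by rescaling. By definition of $J_{\gamma,n}$, for every $g\in G_n^A$ and $\gamma\in(0,\eta)$,
\[
J_{\gamma,n}(g)-\Var(X_g)
=
\frac{2}{\gamma}\left(\rho_\gamma(X_g)+\mathbb E[X_g]-\frac{\gamma}{2}\Var(X_g)\right).
\]
The right-hand side is exactly $2/\gamma$ times the error term controlled uniformly in Theorem~\ref{thm:uniform_small_gamma}. Taking the supremum over $g\in G_n^A$ therefore gives
\[
\sup_{g\in G_n^A}\left|J_{\gamma,n}(g)-\Var(X_g)\right|
=
\frac{2}{\gamma}\sup_{g\in G_n^A}\left|\rho_\gamma(X_g)+\mathbb E[X_g]-\frac{\gamma}{2}\Var(X_g)\right|.
\]

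I would then invoke the quantitative part of Theorem~\ref{thm:uniform_small_gamma}. It provides $\gamma_1\in(0,\eta)$ and $C>0$ such that the last supremum is at most $C\gamma^2$ for $\gamma\in(0,\gamma_1)$. Hence the left-hand side is bounded by $2C\gamma$, which tends to $0$ as $\gamma\downarrow0$. The qualitative $o(\gamma)$ statement of the theorem alone would also suffice, since $\frac{2}{\gamma}\,o(\gamma)\to0$.

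There is essentially no obstacle in this argument. The only point worth checking is that the supremum in Theorem~\ref{thm:uniform_small_gamma} is taken over the same set $G_n^A$. One should also confirm that $\rho_\gamma(X_g)$ is finite for $\gamma<\eta$, so that $J_{\gamma,n}$ is well defined; this follows from Assumption~\ref{ass:uniform_exp_integrability}, as noted in Proposition~\ref{prop:restricted_entropic}. The genuine work, namely uniformity of the Taylor remainder, was already carried out in the theorem. The lemma also yields the rate $O(\gamma)$, which may be useful later for quantitative convergence of minimizers.
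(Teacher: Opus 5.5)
Your proposal is correct and takes essentially the same approach as the paper, whose proof simply multiplies the uniform expansion of Theorem~\ref{thm:uniform_small_gamma} by $2/\gamma$. Using the quantitative $C\gamma^2$ bound to get an explicit $O(\gamma)$ rate is a harmless refinement of that same argument.
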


\begin{proof}
This follows directly from Theorem~\ref{thm:uniform_small_gamma} after multiplying the uniform expansion by $2/\gamma$.
\end{proof}

\begin{theorem}\label{thm:minimizer_convergence}
Assume Assumption~\ref{ass:uniform_exp_integrability}. Let $g^{Q,n}$ be the restricted quadratic minimizer, and let $g_\gamma^{E,n}$ be any restricted entropic minimizer for $\gamma\in(0,\eta)$. Then
\[
\|g_\gamma^{E,n}-g^{Q,n}\|_2\to0
\qquad
\text{as }\gamma\downarrow0.
\]
\end{theorem}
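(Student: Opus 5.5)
The plan is a standard $\Gamma$-convergence style argument, combining the uniform objective convergence of Lemma~\ref{lem:uniform_objective_convergence} with compactness of $G_n^A$ (Lemma~\ref{lem:GnA_properties}) and uniqueness of $g^{Q,n}$ (Proposition~\ref{prop:restricted_quadratic}). Write $F(g):=\Var(X_g)=\|\mathbb E[H]+g-H\|_2^2$ and
\[
\varepsilon(\gamma):=\sup_{g\in G_n^A}|J_{\gamma,n}(g)-F(g)|,
\]
so that $\varepsilon(\gamma)\to0$. Since $J_{\gamma,n}$ and $\rho_\gamma(X_\cdot)$ have the same minimizers on $G_n^A$, each $g_\gamma:=g_\gamma^{E,n}$ minimizes $J_{\gamma,n}$.

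The first step is a near-optimality estimate for $F$. Comparing with $g^{Q,n}$ gives
\[
F(g_\gamma)\le J_{\gamma,n}(g_\gamma)+\varepsilon(\gamma)\le J_{\gamma,n}(g^{Q,n})+\varepsilon(\gamma)\le F(g^{Q,n})+2\varepsilon(\gamma).
\]
Hence $0\le F(g_\gamma)-\min_{G_n^A}F\le 2\varepsilon(\gamma)\to0$.

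The second step converts this into norm convergence. I would avoid a subsequence argument and use strong convexity directly. Set $Z:=H-\mathbb E[H]$, so that $F(g)=\|g-Z\|_2^2$ and $g^{Q,n}$ is the metric projection of $Z$ onto the closed convex set $G_n^A$. The variational inequality for projections gives $\langle Z-g^{Q,n},g-g^{Q,n}\rangle\le0$ for every $g\in G_n^A$. Expanding $\|g-Z\|_2^2=\|(g-g^{Q,n})+(g^{Q,n}-Z)\|_2^2$ then yields
\[
\|g-g^{Q,n}\|_2^2\le F(g)-F(g^{Q,n})\qquad\text{for all }g\in G_n^A.
\]
Applying this with $g=g_\gamma$ gives
\[
\|g_\gamma^{E,n}-g^{Q,n}\|_2\le\sqrt{2\varepsilon(\gamma)}\to0.
\]
Moreover, Theorem~\ref{thm:uniform_small_gamma} gives the rate $\varepsilon(\gamma)\le 2C\gamma$ for $\gamma<\gamma_1$, so the convergence is $O(\gamma^{1/2})$.

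An alternative route is to take any sequence $\gamma_m\downarrow0$, extract an $L^2$-convergent subsequence using compactness of $G_n^A$, and use continuity of $F$ together with the first step to identify the limit as a minimizer of $F$. Uniqueness then forces the limit to be $g^{Q,n}$, and a subsequence-of-subsequence argument gives convergence of the whole family.

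The main point to check is that the argument is valid for an arbitrary choice of minimizer $g_\gamma^{E,n}$. It is, because only the minimizing property against the fixed competitor $g^{Q,n}$ is used. The remaining work is routine.
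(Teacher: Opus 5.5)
Your argument is correct, and your main route differs from the paper's. The paper's proof is essentially your ``alternative route''. It takes $\gamma_k\downarrow0$ and uses compactness of $G_n^A$ to extract an $L^2$-convergent subsequence of entropic minimizers. It then passes to the limit in $J_{\gamma_k,n}(g_{\gamma_k}^{E,n})\le J_{\gamma_k,n}(g^{Q,n})$ via Lemma~\ref{lem:uniform_objective_convergence} and continuity of the variance, identifies the limit by the uniqueness in Proposition~\ref{prop:restricted_quadratic}, and finishes with a subsequence-of-subsequences argument. Your primary argument instead uses the Hilbert-space structure of the limiting objective. The projection variational inequality gives the quadratic growth bound $\|g-g^{Q,n}\|_2^2\le F(g)-F(g^{Q,n})$ on $G_n^A$, which turns the near-optimality gap $2\varepsilon(\gamma)$ directly into a norm estimate. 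This gains two things. First, compactness plays no role in the convergence step: closedness and convexity of $G_n^A$ suffice, and compactness is needed only for existence of $g_\gamma^{E,n}$. Second, combined with the $C\gamma^2$ bound in Theorem~\ref{thm:uniform_small_gamma}, it yields the explicit rate $\|g_\gamma^{E,n}-g^{Q,n}\|_2\le 2\sqrt{C\gamma}$ for $\gamma<\gamma_1$, whereas the paper states that its theorem gives no rate. The paper's soft argument is, in exchange, more robust: it uses only continuity of the limit objective and uniqueness of its minimizer on a compact set, with no growth condition. Your $\gamma^{1/2}$ rate is consistent with the empirical exponent of about $0.86$ in the numerical section. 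An estimate $|(J_{\gamma,n}-F)(g)-(J_{\gamma,n}-F)(g')|\le L\gamma\|g-g'\|_2$ would sharpen it to $O(\gamma)$ through the same growth inequality.
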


\begin{proof}
Let $\gamma_k\downarrow0$. Since $G_n^A$ is compact in $L^2(\mathcal F_N)$, the sequence $(g_{\gamma_k}^{E,n})_k$ has a subsequence, again denoted $(g_{\gamma_k}^{E,n})_k$, such that
\[
g_{\gamma_k}^{E,n}\to \bar g
\qquad
\text{in }L^2(\mathcal F_N)
\]
for some $\bar g\in G_n^A$.

Since $g_{\gamma_k}^{E,n}$ minimizes $J_{\gamma_k,n}$ on $G_n^A$,
\[
J_{\gamma_k,n}(g_{\gamma_k}^{E,n})
\le
J_{\gamma_k,n}(g^{Q,n}).
\]
Using Lemma~\ref{lem:uniform_objective_convergence}, this implies
\[
\Var(X_{g_{\gamma_k}^{E,n}})
\le
\Var(X_{g^{Q,n}})+o(1).
\]
Passing to the limit along the subsequence and using continuity of variance under $L^2$-convergence gives
\[
\Var(X_{\bar g})
\le
\Var(X_{g^{Q,n}}).
\]
Hence $\bar g$ is a minimizer of the restricted quadratic objective. By Proposition~\ref{prop:restricted_quadratic}, this minimizer is unique, so $\bar g=g^{Q,n}$.

Every convergent subsequence of every sequence $\gamma_k\downarrow0$ therefore has the same limit $g^{Q,n}$. Since all minimizers lie in the compact set $G_n^A$, the whole family converges:
\[
\|g_\gamma^{E,n}-g^{Q,n}\|_2\to0.
\]
\end{proof}

The theorem gives convergence at the level of terminal gains. It does not imply convergence of the coefficient vectors $\theta_\gamma$ unless the map $\theta\mapsto \mathcal I(\delta^\theta)$ is injective or otherwise identifiable on the admissible class.

\section{Numerical Illustration}

The theoretical result is illustrated in a discrete-time stochastic volatility model of Heston type~\citep{heston1993closed}. The purpose of the experiment is to test the small-$\gamma$ behavior on a fixed admissible gain set $G_n^A$.

The numerical model uses one traded asset and one non-traded variance factor, which gives an incomplete-market setting. The variance dynamics are of Euler-Heston type, with correlated innovations as in standard simulation schemes for stochastic volatility models~\cite{lord2010comparison}. Since the entropic objective requires exponential integrability of terminal hedged payoffs, the log-price increments are truncated. A predictable compensator is then added so that the discounted price process remains a $\mathbb Q$-martingale.

The contingent claim is a European call option
\[
H=(S_T-K)^+.
\]
Under the truncated compensated dynamics, the log-price increments are bounded on the finite time grid. Together with the bounded coefficient restriction on strategies, this gives a uniform bound on all admissible gains:
\[
|g|
=
\left|
\sum_{k=0}^{N-1}\delta_k^\top\Delta S_{k+1}
\right|
\le C_g,
\qquad
g\in G_n^A.
\]
The call payoff is also bounded under the truncated dynamics. Hence $|X_g|=|g-H|$ is uniformly bounded over $G_n^A$, and Assumption~\ref{ass:uniform_exp_integrability} follows.

For each Monte Carlo seed, simulated paths are split into training and test sets. The restricted quadratic hedge and the restricted entropic hedges are computed on the training set for a sequence of decreasing risk-aversion parameters $\gamma$. The resulting terminal gains are then evaluated on the test set.

The main convergence metric is the normalized gain-space distance
\[
\frac{\|g_\gamma^{E,n}-g^{Q,n}\|_2}
{\|H-\widehat{\mathbb E}[H]\|_2},
\]
where $\widehat{\mathbb E}[H]$ is the empirical payoff mean. This metric corresponds directly to the $L^2$ convergence statement in Theorem~\ref{thm:minimizer_convergence}. 

\begin{figure}[htbp]
\centering
\IfFileExists{figures/plot_gamma_distance_reg.png}{%
\includegraphics[width=0.8\textwidth]{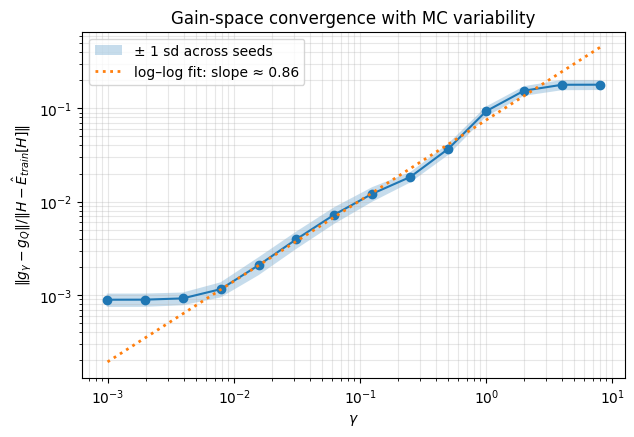}%
}{%
\fbox{\parbox{0.8\textwidth}{\centering Placeholder for \texttt{figures/plot\_gamma\_distance\_reg}}}%
}
\caption{Normalized distance between restricted entropic and quadratic gains as a function of $\gamma$. The dotted line shows a log-log regression over an intermediate range of $\gamma$, and the shaded region indicates sampling variability across Monte Carlo seeds.}
\label{fig:gamma-distance}
\end{figure}

Figure~\ref{fig:gamma-distance} shows that the gain-space distance decreases as $\gamma$ becomes small. In the intermediate small-$\gamma$ range, the decay is approximately linear on a log-log scale. For very small $\gamma$, the curve reaches a plateau caused by Monte Carlo sampling error and numerical precision. The log-log fit gives an empirical exponent of approximately $0.86$ over the fitted range. This exponent is only a numerical description of the experiment since Theorem~\ref{thm:minimizer_convergence} proves convergence but does not provide a rate.

The convergence statement is formulated at the level of terminal gains. It does not, in general, imply convergence of the corresponding coefficient vectors, since the map $\theta\mapsto \mathcal I(\delta^\theta)$ need not be injective. Parameter-space convergence would require additional identifiability assumptions on the chosen parametrization.

Overall, the numerical results are consistent with the theoretical prediction that restricted entropic hedging gains converge to the restricted quadratic hedge in the small-risk-aversion limit.

\section{Conclusion}

This paper studied quadratic and entropic hedging in a discrete-time incomplete market under bounded finite-dimensional strategy restrictions. The restriction induces an admissible gain set $G_n^A$ that is compact and convex in $L^2$. This gives a common well-posed setting for the restricted quadratic problem and the restricted entropic problem.

The main analytical step was a uniform small-risk-aversion expansion of the entropic risk functional over $G_n^A$. The uniformity is essential because the entropic minimizer depends on the risk-aversion parameter. After rescaling and removing constants independent of the admissible gain, the entropic objective converges uniformly to the quadratic objective on $G_n^A$.

Combining this uniform objective convergence with compactness of the admissible gain set and uniqueness of the restricted quadratic minimizer gives convergence of restricted entropic hedging gains to the restricted quadratic hedge in $L^2$ as $\gamma\downarrow0$.

The numerical experiment in a truncated compensated Heston-type model illustrates the same behavior. The gain-space distance between the restricted entropic hedge and the restricted quadratic hedge decreases as the risk-aversion parameter becomes small, until the difference is dominated by Monte Carlo and numerical error.

The result is limited to a fixed finite-dimensional bounded strategy class and to convergence at the level of terminal gains. It does not by itself imply convergence of strategy parameters, nor does it address approximation as the dimension of the strategy class increases. These are separate questions from the small-risk-aversion limit considered here.

\section*{Acknowledgements}

This paper is based on parts of the author's master's
thesis~\citep{hanetho2026quadratic}, completed at the
University of Oslo in 2026. The author thanks Giulia Di Nunno
for her supervision and helpful feedback.

\bibliographystyle{apalike}
\bibliography{sample2}

@book{follmer2011stochastic,
  title={Stochastic finance: an introduction in discrete time},
  author={F{\"o}llmer, Hans and Schied, Alexander},
  year={2011},
  publisher={Walter de Gruyter}
}

@article{lord2010comparison,
  title={A comparison of biased simulation schemes for stochastic volatility models},
  author={Lord, Roger and Koekkoek, Remmert and Dijk, Dick Van},
  journal={Quantitative Finance},
  volume={10},
  number={2},
  pages={177--194},
  year={2010},
  publisher={Taylor \& Francis}
}

@article{schweizer1995variance,
  title={Variance-optimal hedging in discrete time},
  author={Schweizer, Martin},
  journal={Mathematics of Operations Research},
  volume={20},
  number={1},
  pages={1--32},
  year={1995},
  publisher={INFORMS}
}

@book{brezis2011functional,
  title={Functional analysis, Sobolev spaces and partial differential equations},
  author={Brezis, Haim},
  volume={2},
  number={3},
  year={2011},
  publisher={Springer}
}

@incollection{pratt1978risk,
  title={Risk aversion in the small and in the large},
  author={Pratt, John W},
  booktitle={Uncertainty in economics},
  pages={59--79},
  year={1978},
  publisher={Elsevier}
}

@article{heston1993closed,
  title={A closed-form solution for options with stochastic volatility with applications to bond and currency options},
  author={Heston, Steven L},
  journal={The review of financial studies},
  volume={6},
  number={2},
  pages={327--343},
  year={1993},
  publisher={Oxford University Press}
}

@book{glasserman2004monte,
  title={Monte Carlo methods in financial engineering},
  author={Glasserman, Paul},
  volume={53},
  year={2004},
  publisher={Springer}
}

@article{longstaff2001valuing,
  title={Valuing American options by simulation: A simple least-squares approach},
  author={Longstaff, Francis A and Schwartz, Eduardo S},
  journal={The review of financial studies},
  volume={14},
  number={1},
  pages={113--147},
  year={2001},
  publisher={Oxford University Press}
}

@mastersthesis{hanetho2026quadratic,
  author = {Hanetho, Jonas Rotschi},
  title  = {Quadratic and Entropic Hedging in Incomplete Markets:
            Finite-Dimensional Strategy Restrictions and Asymptotic Analysis},
  school = {University of Oslo},
  year   = {2026},
  type   = {Master's thesis},
  url    = {https://hdl.handle.net/11250/5542261}
}

\end{document}